
\documentclass[letterpaper, 10 pt, conference]{ieeeconf}  


\usepackage[noadjust]{cite}

\usepackage{ amssymb }
\usepackage{float}
\usepackage{amsmath}
\usepackage{amsfonts}
\usepackage{enumerate}  
\usepackage{graphicx}
\usepackage{bm}   
\usepackage{ dsfont }
\usepackage{mathtools}
\usepackage{upgreek}
\mathtoolsset{showonlyrefs=true}
\usepackage{etoolbox}
\usepackage{courier}
\usepackage{url}
\usepackage{xcolor}

\newcommand{\mfs}[1]{{{\textsf{\footnotesize #1}}}}

\newcommand{\mf}{\mathbf}
\newcommand{\sgn}[2]{\left\lceil#1\right\rfloor^{#2}}

\newtheorem{proposition}{Proposition}

\newtheorem{assumption}{Assumption}

\newtheorem{theorem}{Theorem}
\newtheorem{remark}{Remark}
\newtheorem{corollary}{Corollary}
\newtheorem{definition}{Definition}

\newcommand\blfootnote[1]{%
  \begingroup
  \renewcommand\thefootnote{}\footnote{#1}%
  \addtocounter{footnote}{-1}%
  \endgroup
}

\def\RevOne#1{{#1}}
\def\RevSix#1{{#1}}
\def\RevSeven#1{{#1}}
\def\RevAll#1{{#1}}

\title{
Dynamic consensus with prescribed convergence time for multi-leader formation tracking
}

\author{Rodrigo Aldana-López, David~Gómez-Gutiérrez, Rosario~Aragüés, and Carlos Sagüés.}

\begin{document}

\maketitle
\thispagestyle{empty}
\pagestyle{empty}

\begin{abstract}
This work addresses the problem of distributed formation tracking for a group of follower \RevSix{holonomic mobile robots} around a reference signal. The reference signal is comprised of the geometric center of the positions of multiple leaders. \RevAll{This work's main contribution is a novel Modulated Distributed Virtual Observer (MDVO) for the reference signal.  Moreover, the proposed MDVO is based on an exact dynamic consensus algorithm with a prescribed convergence time.} In addition, we provide simulation examples showcasing two different application scenarios for the proposal.
\blfootnote{This work was supported by projects COMMANDIA SOE2/P1/F0638
(Interreg Sudoe Programme, ERDF), PGC2018-098719-B-I00 (MCIU/ AEI/
FEDER, UE) and DGA T45-20R (Gobierno de Aragón). The authors would like to acknowledge the sponsorship of Universidad de Zaragoza, Banco Santander and CONACYT, Mexico.}
\blfootnote{Rodrigo Aldana-López, Rosario Aragüés and Carlos Sagüés are with Departamento de Informática e Ingeniería de Sistemas (DIIS) and Instituto de Investigación en Ingeniería de Aragón (I3A), 
Universidad de Zaragoza, Zaragoza 50018, Spain. David~Gómez-Gutiérrez is with Intel Labs, Intel Tecnología de México, Av. del Bosque 1001, 45019, Zapopan, Jalisco, Mexico and Instituto Tecnológico José Mario Molina Pasquel y Henríquez, Unidad Académica Zapopan. Cam. Arenero 1101, 45019, Zapopan, Jalisco, Mexico.
(e-mail:  {\tt\small rodrigo.aldana.lopez@gmail.com, raragues@unizar.es, csagues@unizar.es, david.gomez.g@ieee.org})}

\blfootnote{\textcolor{red}{This is the accepted version of the manuscript: R. Aldana-López, D. Gómez-Gutiérrez, R. Aragüés and C. Sagüés, ``Dynamic consensus with prescribed convergence time for multi-leader formation tracking," in IEEE Control Systems Letters, 2022, DOI: 10.1109/LCSYS.2022.3181784. 
\textbf{Please cite the publisher's version}. For the publisher's version and full citation details see:
\url{https://doi.org/10.1109/LCSYS.2022.3181784}. 
}
}
\blfootnote{``© 2022 IEEE.  Personal use of this material is permitted.  Permission from IEEE must be obtained for all other uses, in any current or future media, including reprinting/republishing this material for advertising or promotional purposes, creating new collective works, for resale or redistribution to servers or lists, or reuse of any copyrighted component of this work in other works.”}
\end{abstract}
\begin{keywords}
Distributed control; Formation control; Dynamic consensus; Leader-follower; Prescribed-time convergence
\end{keywords}

\section{Introduction}
Problems in the context of distributed Multi-Agent Systems (MAS) have persistently attracted attention in the control and robotics community during the last couple of decades. In particular, a popular tool for effectively achieving distributed collaboration along the MAS are consensus protocols. In this setting, the agents in the system agree on a quantity of interest by sharing information to neighbors in the network. The problem can be classified depending on whether the consensus value is static \cite{Olfati2007}, dynamic \cite{edcho} or when it is dictated by a leader \cite{Jadbabaie2003}. 

\RevAll{In addition, multi-leader distributed problems have attracted attention in the literature. For example, in the context of containment control and herding, there is interest in the situation in which follower robots maintain a formation inside the convex hull of the positions of a set of leader robots \cite{zeno}. This is useful in situations with heterogeneous robots, where leaders are equipped with sensors to detect obstacles, whereas followers are not. In this case, computing the geometric center of the leaders' positions and maintaining a formation around it can be used as a safe strategy for the follower robots. On the other hand, in the context of automated surveillance and escorting \cite{escort}, leader robots may be able to obtain imperfect detections of a single target for follower robots to achieve a formation around it. In this case, the geometric center of the leader estimations can be used as an improved estimation for the target. Hence, in this work, we are interested in the problem of distributed formation tracking of a group of follower \RevSix{holonomic mobile robots} around the geometric center of a group of leaders. }

A standard approach for the leader-follower tracking problem is to adopt an observer-based approach \cite{chuan2021}. In this case, a Distributed Virtual Observer (DVO) is used for the reference of interest, in this case, the geometric center of the leaders' positions. In a second step, a local controller is designed to drive the followers' state to the reference given by the DVO. This strategy has been used for second-order systems \cite{trujillo2020}, Euler-Lagrange systems \cite{cai2016}, and high order dynamics \cite{chuan2019}.

In the robotics context, it may be required for the DVO to obtain the global reference in real-time, i.e., before a time deadline \cite{trujillo2020}. This may be either mandated by the application or to ensure the correctness of the local controller if both are executed simultaneously. Finite-time stability concepts in control theory can be used to ensure convergence before a given deadline. \RevOne{In contrast to usual asymptotic convergence, fixed-time and prescribed-time stable systems have a uniformly bounded settling-time \cite{poly2012} which can be used to prescribe the convergence time. }\RevAll{A time-independent fixed-time approach has been used for leader-follower algorithms, e.g., in~\cite{defoort2015, trujillo2020, fx1, fx2, fx3}}. One of the disadvantages of these techniques is that an explicit expression for the settling-time bound as a function of initial conditions is usually unknown, limiting its applicability for real-time. In addition, even if a settling-time bound is known, it is usually overestimated, leading to over-engineered solutions.

Other works use persistently growing time-varying gains, which ultimately become singular at the deadline. \RevAll{This technique was used for consensus in \cite{ning2019,zhao2019, holeader, holeaderevent,trujillo2020}}. The most significant disadvantage of this approach is that a small disturbance easily compromises the system's behavior  due to the high gains employed. Moreover, Time Base Generators (TBG) were used as tracking references in \cite{tbg} achieving convergence before a deadline in a single-leader scenario with agents of high order dynamics. A similar multi-leader problem is tackled in \cite{zeno}. \RevAll{However, convergence in prescribed-time is obtained by means of a switching algorithm inducing Zeno behaviour.}

Motivated by the previous discussion, we propose a framework for multi-leader formation tracking. \RevAll{The main contribution of the proposal is a Modulated Distributed Virtual Observer (MDVO) which  computes the geometric center of the multiple leaders' position. One of the building-blocks of the MDVO is an adaptation for real-time constraints of the EDCHO protocol \cite{edcho} \RevAll{which was shown to have robust features compared to linear and first order sliding mode protocols as discussed in \cite{edcho, redcho}. The new protocol introduces TBGs as modulating functions in order to prescribe the convergence time, where we provide a formal stability analysis.}}  Furthermore, the versatility of MDVO is shown through simulations showcasing two different application scenarios for multi-robot systems. \RevAll{The differences of MDVO with respect to the rest of the literature are also discussed}.\footnote{{Simulation files for the algorithms presented in this work can be found on \url{https://github.com/RodrigoAldana/EDC}.}}

\subsection{Notation}
Let $\mathcal{C}^{m+1}[0,\infty)$ the set of all functions $f:[0,\infty)\to\mathbb{R}$ from which the $(m+1)$-th derivative $f^{(m+1)}(t)$ is continuous $\forall t\geq 0$. Given a vector \RevOne{$\mf{x}=[x_1,\dots,x_n]^T$ denote the Euclidean norm as $\|\mf{x}\|$ and the $\infty$-norm as $\|\mf{x}\|_\infty:=\max_{1\leq i\leq n} |x_i|$}. $\binom{\mu}{\nu}$ denotes the binomial coefficient. Let $\text{sign}(x) = 1$ if $x> 0, \text{sign}(x)=-1$ if $x<0$ and $\text{sign}(0)=0$. Moreover, if $x\in\mathbb{R}$, let $\lceil x\rfloor^\alpha:= |x|^\alpha\text{sign}(x)$ for $\alpha>0$ and $\lceil x\rfloor^0:={\text{sign}}(x)$. When $\mf{x}=[x_1,\dots,x_n]^T\in\mathbb{R}^n$, then $\sgn{\mf{x}}{\alpha}:=\left[\sgn{x_1}{\alpha},\dots,\sgn{x_n}{\alpha}\right]^T$ for $\alpha\geq 0$.
\section{Problem statement}
\label{sec:problem}
Consider a multi-agent system of $\mfs{N}$ \RevSix{holonomic mobile robots}. From this team of robots, $\mfs{N}_{\mfs{L}}\geq 1$ robots are considered to be leaders whereas the remaining $\mfs{N}_{\mfs{F}} = \mfs{N}-\mfs{N}_{\mfs{L}}$ are followers. Moreover, all robots are connected through a communication network modeled by an undirected graph $\mathcal{G}$, where a node corresponds to a single robot and an edge corresponds to a bi-directional communication link between two robots. We do not assume that all followers are directly connected to a leader, but that $\mathcal{G}$ is connected. In addition, for convenience in the presentation, we index the robots using the index sets $\mathcal{I}_{\mfs{L}} = \{1,\dots,\mfs{N}_{\mfs{L}}\}$  for the leaders, $\mathcal{I}_{\mfs{F}} = \{\mfs{N}_{\mfs{L}}+1,\dots,\mfs{N}\}$ for the followers and $\mathcal{I}=\mathcal{I}_{\mfs{L}}\cup\mathcal{I}_{\mfs{F}}$. Let $\mf{p}_i(t)\in\mathbb{R}^3$ be the position of the $i$-th robot in $\mathbb{R}^3$. For simplicity, we assume that the follower robots have dynamics
\begin{equation}
    \label{eq:dynamics}
    \mf{p}_i^{(m)}(t) = \mf{u}_i(t), i\in\mathcal{I}_{\mfs{F}}
\end{equation}
where $ \mf{p}_i^{(m)}(t)$ is the $m$-th derivative of $ \mf{p}_i(t), i\in\mathcal{I}_{\mfs{F}}$ for given $m\in\mathbb{N}$ and $\mf{u}_i(t)\in\mathbb{R}^3, i\in\mathcal{I}_{\mfs{F}}$ is a local control input driving the robot dynamics. Despite the simplicity of the model in \eqref{eq:dynamics}, the results presented in this work are easily adapted to different linear dynamics.  On the other hand, we assume that the leaders have arbitrary $(m+1)$-times differentiable trajectories $\mf{p}_i(t), \forall i\in\mathcal{I}_{\mfs{L}}$. However, the leaders can cooperate into communicating information to the network or collaborating in some form of distributed computation.  The goal of the followers is to choose $\mf{u}_i(t),i\in\mathcal{I}_{\mfs{F}}$ in order to achieve the following:
\begin{definition} 
\label{def:ml_tracking}
We say that the followers achieved multi-leader formation tracking if $
    \lim_{t\to\infty}\left\|\mf{p}_i(t) - \mf{\bar{p}}(t) - \mf{d}_i\right\| = 0, \forall i\in\mathcal{I}_\mfs{F}$ where 
\begin{equation}
\label{eq:correct_average}
\mf{\bar{p}}(t):=\frac{\mf{p}_1(t)+\cdots+\mf{p}_{\mfs{N}_\mfs{L}}(t)}{\mfs{N}_{\mfs{L}}}
\end{equation} for formation displacements $\mf{d}_i\in\mathbb{R}^3$ \RevSeven{fixed beforehand}.
\end{definition}

In order to perform this task, a Distributed Virtual Observer (DVO) estimating the reference $\mf{\bar{p}}(t)$ can be used at each robot so that the controller $\mf{u}_i(t), i\in\mathcal{I}_\mfs{F}$ is designed for $\mf{p}_i(t)$ to track such reference. \RevAll{The desired properties of such DVO are enlisted as follows.}
\begin{definition}
\label{def:rt_dvo}
\RevAll{A Distributed Virtual Observer (DVO) for $\mf{\bar{p}}(t)$ is an algorithm that runs at each robot $i\in\mathcal{I}$ and computes local estimations $\hat{\mf{p}}_{i,0}(t),\dots,\hat{\mf{p}}_{i,m}(t)\in\mathbb{R}^3$ for $\mf{\bar{p}}(t)$ and its first $m$ derivatives respectively. The DVO at robot $i\in\mathcal{I}$ communicates ${\mfs{N}_\mfs{a}}>0$ reals in a vector $\mf{a}_i(t)\in\mathbb{R}^{\mfs{N}_\mfs{a}}$ only to its neighbors through the network $\mathcal{G}$.} Moreover, a DVO has a prescribed convergence time \RevSix{\cite{tbg,trujillo2020}} if given $T_c>0$ then $\mf{\hat{p}}_{i,\mu}(t) = \mf{\bar{p}}^{(\mu)}(t), \forall t\geq T_c, \forall i\in\mathcal{I}, \forall \mu\in\{0,\dots,m\}$.
\end{definition}

\RevAll{To solve this problem we propose a Modulated Distributed Virtual Observer (MDVO), outlined as follows.} First, we assign a \RevSeven{local} vector signal $\mf{s}_i(t)\in\mathbb{R}^3$ at each robot depending on if it is a leader or a follower:
\begin{equation}
\label{eq:dyn_signals}
    \mf{s}_i(t) = \left\{
    \begin{array}{cl}
    \mf{p}_i(t) & \text{ if } i\in\mathcal{I}_{\mfs{L}} \\
    0 & \text{ if } i\in\mathcal{I}_{\mfs{F}} \\
    \end{array}
    \right.
\end{equation}
\RevSeven{To allow a more general setting, we do not assume the global number of leaders $\mfs{N}_\mfs{L}$ to be known to all robots}. Instead, we assign auxiliary scalar \RevSeven{local} labels \RevAll{$\ell_i$} as
\begin{equation}
\label{eq:stat_signals}
    \RevAll{\ell_i} = \left\{
    \begin{array}{cl}
    1 & \text{ if } i\in\mathcal{I}_{\mfs{L}} \\
    0 & \text{ if } i\in\mathcal{I}_{\mfs{F}} \\
    \end{array}
    \right.
\end{equation}
Then, MDVO uses exact dynamic consensus tools in order to compute the following averages through the $\mfs{N}$ robots in a distributed fashion:
\begin{equation}
\begin{aligned}
    \label{eq:averages}
    \mf{\bar{s}}(t)&= \frac{\mf{s}_1(t)+\dots+\mf{s}_{\mfs{N}}(t)}{\mfs{N}}=\frac{\mf{p}_1(t)+\dots+\mf{p}_{\mfs{N}_{\mfs{L}}}(t)}{\mfs{N}}\\
    \bar{\ell} &=\frac{\ell_1+\dots+\ell_{\mfs{N}}}{\mfs{N}} =\frac{\mfs{N}_{\mfs{L}}}{\mfs{N}}
\end{aligned}
\end{equation}

Note that $\bar{\mf{p}}(t)$ can be computed from the ratio $\mf{\bar{s}}(t)/\bar{\ell}$. Once each robot has access to the signal $\bar{\mf{p}}(t)$, a trajectory tracking controller can be applied. Note that computing the averages \eqref{eq:averages} with prescribed convergence time is challenging due to the time-varying character of the signals $\mf{s}_i(t), i\in\mathcal{I}$. \RevAll{Hence, MDVO includes 4 distributed dynamic consensus blocks, one to compute each component of $\bar{\mf{s}}(t)$ and $\bar{\ell}$. Although we do not assume knowledge of neither $\mfs{N}$ nor $\mfs{N}_{\mfs{L}}$, we assume that all robots run the MDVO algorithm with the same parameters, fixed beforehand. \RevSeven{Moreover, each robot $i\in\mathcal{I}$ has access to its own signal $\mf{s}_i(t)$ and label $\ell_i$}. This allows the possibility of the same algorithm to work for different network configurations and sizes. Moreover, as described in Section \ref{sec:radvo}, each robot $i\in\mathcal{I}$ shares $\mfs{N}_{\mfs{a}}=4$ numbers, one for each consensus block, to their neighbors.}

\section{\RevOne{Modulated Dynamic Consensus}}
\label{sec:medcho}
In this section, we present the general form of the distributed dynamic consensus block used in the proposed MDVO. Here, the block is presented for arbitrary scalar signals $s_i(t)$ located at each robot. The protocol has $m+1$ internal states $x_{i,0}(t),\dots,x_{i,m}(t)$ and virtual outputs $y_{i,0}(t),\dots,y_{i,m}(t)$ at each agent $i\in\mathcal{I}$. Given a deadline $T_c>0$, the purpose of the protocol is that each agent achieves
$
y_{i,\mu}(t) = \bar{s}^{(\mu)}(t), \ \ \ \forall t\geq T_c, \forall i\in\mathcal{I}, \forall \mu\in\{0,\dots,m\}
$
where 
$
\bar{s}(t):= (1/\mfs{N})\sum_{i=1}^{\mfs{N}}s_i(t)
$. The algorithm is written as:
\begin{equation}
\label{eq:medcho}
\begin{aligned}
    \dot{x}_{i,\mu} &=  \theta^{\frac{\mu+1}{m+1}}k_\mu \mbox{$\sum_{j=1}^{\mfs{N}}$}a_{ij}\lceil y_{i,0}- y_{j,0} \rfloor^{\frac{m-\mu}{m+1}} + x_{i,\mu+1} \\
    &\text{for }0\leq\mu < m\\
    \dot{x}_{i,m} &= \theta k_m \mbox{$\sum_{j=1}^{\mfs{N}}$}a_{ij}\sgn{y_{i,0} - y_{j,0} }{0} \\
    y_{i,\mu} &= \sigma_i^{(\mu)} - x_{i,\mu}.
\end{aligned}
\end{equation}
where time dependence was omitted for brevity, $a_{ij}\in\{0,1\}$ are the components of the adjacency matrix of $\mathcal{G}$ \RevSix{such that only neighbors information contribute to \eqref{eq:medcho}. } Moreover, 
\begin{equation}
\label{eq:sigma}
\sigma_i^{(\mu)}(t) := \dfrac{\text{d}^\mu}{\text{d}t^{\mu}}\bigg(\kappa\left(\frac{t}{T_c}\right)s_i(t)\bigg) 
\end{equation}
with the introduction of the function $\kappa(\bullet)$ to modulate the signals $s_i(t)$ in order to meet deadline at $T_c$. This function must comply with the following properties:
\begin{definition}
\label{def:modulating}
A $\mathcal{C}^{m+1}[0,\infty)$ function $\kappa(\bullet)$ is called an $m$-th order modulating function if $\kappa(0) = 0$, $\kappa(t) = 1, \forall t\geq 1$ and $\kappa^{(\mu)}(0) = \kappa^{(\mu)}(1) = 0, \forall t\geq 1, \forall \mu\in\{1,\dots,m\}$.
\end{definition}
An example of such function is provided in Appendix \ref{ap:modulating}. 

\RevAll{Note for a single \eqref{eq:medcho} block, agents only share their output $y_{i,0}(t)$.} \RevSix{Moreover, \eqref{eq:medcho} uses the parameter $\theta$ and the  modulating function $\kappa(\bullet)$ to start at consensus with $y_{i,\mu}(0)=0$ and guide the convergence of the algorithm outputs towards $y_{i,\mu}(T_c)=\bar{s}^{(\mu)}(T_c)$ as shown in the following:}
\begin{assumption}
\label{as:all_bounded}
Given bounds $L_0,\dots,L_{m+1}>0$, then $\forall \mu\in\{0,\dots,m+1\}$ it follows that $\left|\bar{s}^{(\mu)}(t)-s_i^{(\mu)}(t)\right|\leq L_\mu, \forall t\in [0,T_c], \left|\bar{s}^{(m+1)}(t)-s_i^{(m+1)}(t)\right|\leq L_{m+1}, \forall t\geq T_c$.
\end{assumption}
\begin{theorem}
\label{th:medcho}
Let Assumption \ref{as:all_bounded} \RevSeven{hold}, $\mathcal{G}$ be a fixed connected graph and $k_0,\dots,k_m>0$ as in Proposition \ref{prop:edcho} in Appendix \ref{ap:edcho} for $L=1$. Moreover, let $x_{i,\mu}(0) = 0, \forall i\in\mathcal{I}, \forall\mu\in\{0,\dots,m\}$ and $\kappa(t)$ be a modulating function of order $m$. Denote with $K_{\mu}:=\sup_{t\in[0,1]}|\kappa^{(\mu)}(t)|$ and choose
\begin{equation}
\label{eq:gain}
    \theta = \sum_{\nu=0}^{m+1}\binom{m+1}{\nu} \frac{1}{T_{\min}^{m-\nu+1}}K_{m-\nu+1}L_\nu
\end{equation}
for some $T_{\min}>0$. Hence, given any deadline $T_c\geq T_{\min}$, the protocol \eqref{eq:medcho} achieves $y_{i,\mu}(t)=\bar{s}^{(\mu)}(t), \forall t\geq T_c, \forall i\in\mathcal{I}, \forall \mu\in\{0,\dots,m\}$.
\end{theorem}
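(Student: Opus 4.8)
The plan is to reduce the modulated protocol \eqref{eq:medcho} to the unmodulated EDCHO protocol of Proposition \ref{prop:edcho} run on the rescaled signals $\hat{\sigma}_i(t):=\sigma_i(t)/\theta$, and then to exploit that the prescribed initialization places the algorithm exactly on the consensus manifold at $t=0$, so that it never leaves it. Throughout, write $\bar{\sigma}:=\frac{1}{\mfs{N}}\sum_{i\in\mathcal{I}}\sigma_i=\kappa(t/T_c)\,\bar{s}$. The first ingredient is the behaviour of $\sigma_i(t)=\kappa(t/T_c)s_i(t)$ at the two endpoints. By the Leibniz rule $\sigma_i^{(\mu)}(0)$ is a linear combination of the products $\kappa^{(\nu)}(0)\,s_i^{(\mu-\nu)}(0)$ with $0\le\nu\le\mu\le m$; since $\kappa(0)=0$ and $\kappa^{(\nu)}(0)=0$ for $\nu\in\{1,\dots,m\}$ by Definition \ref{def:modulating}, this gives $\sigma_i^{(\mu)}(0)=0$ and likewise $\bar{\sigma}^{(\mu)}(0)=0$, whence $y_{i,\mu}(0)=\sigma_i^{(\mu)}(0)-x_{i,\mu}(0)=0=\bar{\sigma}^{(\mu)}(0)$ for every $i\in\mathcal{I}$ and $\mu\in\{0,\dots,m\}$. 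Symmetrically, $\kappa(t/T_c)\equiv 1$ on $[T_c,\infty)$, so every derivative of $\kappa(t/T_c)$ of order $\ge 1$ vanishes there and, using $\kappa^{(\nu)}(1)=0$ for the matching of one-sided derivatives at $t=T_c$, the Leibniz rule yields $\bar{\sigma}^{(\mu)}(t)=\bar{s}^{(\mu)}(t)$ for all $t\ge T_c$ and $\mu\in\{0,\dots,m\}$. Hence it suffices to prove $y_{i,\mu}(t)=\bar{\sigma}^{(\mu)}(t)$ for all $t\ge 0$.

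The second ingredient is the perturbation bound $\big|\sigma_i^{(m+1)}(t)-\bar{\sigma}^{(m+1)}(t)\big|\le\theta$ for all $t\ge 0$. From the Leibniz expansion
\[
\sigma_i^{(m+1)}-\bar{\sigma}^{(m+1)}=\sum_{\nu=0}^{m+1}\binom{m+1}{\nu}\frac{\kappa^{(\nu)}(t/T_c)}{T_c^{\nu}}\Big(s_i^{(m+1-\nu)}-\bar{s}^{(m+1-\nu)}\Big),
\]
one obtains, for $t\in[0,T_c]$, the bounds $|\kappa^{(\nu)}(t/T_c)|\le K_\nu$, $T_c^{-\nu}\le T_{\min}^{-\nu}$ (since $T_c\ge T_{\min}$) and $|s_i^{(m+1-\nu)}-\bar{s}^{(m+1-\nu)}|\le L_{m+1-\nu}$ from Assumption \ref{as:all_bounded}; reindexing $\nu\mapsto m+1-\nu$ turns the resulting upper bound into precisely the right-hand side of \eqref{eq:gain}, i.e.\ into $\theta$. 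For $t\ge T_c$ only the $\nu=0$ term survives, leaving $|s_i^{(m+1)}-\bar{s}^{(m+1)}|\le L_{m+1}\le K_0L_{m+1}\le\theta$ because $K_0\ge|\kappa(1)|=1$. Dividing by $\theta>0$, the rescaled signals obey $\big|\hat{\sigma}_i^{(m+1)}(t)-\bar{\sigma}^{(m+1)}(t)/\theta\big|\le 1$ for all $t\ge 0$.

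To finish I would substitute $\hat{x}_{i,\mu}:=x_{i,\mu}/\theta$, so that $\hat{y}_{i,\mu}=y_{i,\mu}/\theta$, into \eqref{eq:medcho} and use the homogeneity identity $\sgn{\theta z}{\alpha}=\theta^{\alpha}\sgn{z}{\alpha}$, valid for $\theta>0$ and $\alpha\ge 0$. In the $\mu$-th equation the powers of $\theta$ then cancel: dividing by $\theta$ contributes $\theta^{-1}$, the gain carries $\theta^{(\mu+1)/(m+1)}$, and writing $y_{i,0}-y_{j,0}=\theta(\hat{y}_{i,0}-\hat{y}_{j,0})$ inside the correction term contributes $\theta^{(m-\mu)/(m+1)}$, and $(\mu+1)/(m+1)-1+(m-\mu)/(m+1)=0$. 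Thus \eqref{eq:medcho} becomes exactly the EDCHO protocol of Proposition \ref{prop:edcho} driven by $\hat{\sigma}_i:=\sigma_i/\theta$, with the gains $k_0,\dots,k_m$ tuned for $L=1$, perturbation bound $1$, and initialization $\hat{x}_{i,\mu}(0)=0$. Since $\hat{y}_{i,\mu}(0)=\hat{\sigma}_i^{(\mu)}(0)=0=\bar{\sigma}^{(\mu)}(0)/\theta$, the consensus error vanishes at $t=0$; by the finite-time exact-consensus guarantee of Proposition \ref{prop:edcho} under the unit perturbation bound --- which in particular renders the consensus manifold invariant once reached --- this error stays zero, so $\hat{y}_{i,\mu}(t)=\bar{\sigma}^{(\mu)}(t)/\theta$, equivalently $y_{i,\mu}(t)=\bar{\sigma}^{(\mu)}(t)$, for all $t\ge 0$. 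Combined with $\bar{\sigma}^{(\mu)}(t)=\bar{s}^{(\mu)}(t)$ on $[T_c,\infty)$, this is the claim.

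I expect the delicate point to be this last step: it invokes Proposition \ref{prop:edcho} not merely as an ``eventual convergence'' statement but through the invariance (sliding-mode maintenance) of the consensus manifold when the error is exactly zero, which in turn calls for a short Filippov-solution argument around the discontinuous $\mu=m$ correction term and a check that a unit perturbation is exactly what the chosen $k_\mu$ tolerate. The other place requiring precision is the bookkeeping in the perturbation-bound step --- matching the reindexed binomial sum to \eqref{eq:gain} term by term and separately treating the regime $t\ge T_c$.
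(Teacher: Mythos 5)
Your proof is correct and follows essentially the same route as the paper's: rescale by $\theta$ to reduce \eqref{eq:medcho} to the EDCHO protocol with unit perturbation bound, verify via the Leibniz expansion of $\sigma_i^{(m+1)}$ that \eqref{eq:gain} is exactly the required bound, and use $\sigma_i^{(\mu)}(0)=0$ together with $x_{i,\mu}(0)=0$ to conclude the system starts on (and hence never leaves) the consensus manifold, so that $y_{i,\mu}(t)=\bar{\sigma}^{(\mu)}(t)$ for all $t\geq 0$ and $\bar{\sigma}^{(\mu)}=\bar{s}^{(\mu)}$ for $t\geq T_c$. You are in fact slightly more explicit than the paper on the homogeneity cancellation, on the separate $t\geq T_c$ bound via $K_0\geq 1$, and on the fact that the ``stay at consensus'' step rests on invariance rather than mere attractivity --- a point the paper also asserts without a detailed Filippov argument.
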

\begin{proof}
The proof can be found in Appendix \ref{ap:medcho} 
\end{proof}
\begin{remark}
Note that from \eqref{eq:gain}, as the minimum allowed deadline $T_{\min}$ is decreased, the gain $\theta$ increases. This is expected since a smaller deadline will require greater correction effort for faster results in the protocol. 
\end{remark}
\begin{remark}
\label{rem:redcho}
Note that the proposed protocol assumes that all agents have knowledge of a synchronized global timer $t$ to compute the modulating function $\kappa(t)$. This assumption has been used extensively in other works (see \cite{ning2019,zhao2019,trujillo2020}), which is particularly important when a notion of a deadline $T_c>0$ is present. \RevSeven{However, it can be shown that due to finite-time stability character of \eqref{eq:medcho}, a small deviation in the timers for the signal $\kappa(t)$ leads to at most small changes in the settling time as well. }

\end{remark}

\section{Multi-leader formation tracking}

\subsection{\RevOne{Modulated Distributed Virtual Observer}}
\label{sec:radvo}

Equipped with the protocol in \eqref{eq:medcho}, MDVO is constructed using a ratio-consensus strategy. Recall the multi-leader follower setting from Section \ref{sec:problem} and the local signals $\mf{s}_i(t), \ell_i$ from \eqref{eq:dyn_signals} and $\eqref{eq:stat_signals}$. Moreover,  write $\mf{s}_i(t)$ through its components as $\mf{s}_i(t) = [s_i^{\mfs{X}}(t),s_i^{\mfs{Y}}(t),s_i^{\mfs{Z}}(t)]^T\in\mathbb{R}^3$. MDVO has four different instances of \eqref{eq:medcho}. In particular, denote with $\mathcal{M}_{\mf{s}}$ three identical \eqref{eq:medcho} blocks, one for each $s_i^{\mfs{X}}(t),s_i^{\mfs{Y}}(t),s_i^{\mfs{Z}}(t)$. Let ${\mf{y}}_{i,\mu}(t)\in\mathbb{R}^3$ a vector containing the three different $\mu$-th outputs at robot $i$ for $\mathcal{M}_{\mf{s}}$. Similarly, denote with $\mathcal{M}_{\ell}$ the block \eqref{eq:medcho} applied to labels $\ell_i$ with $\mu$-th output $l_{i,\mu}(t)$. \RevSeven{Hence, each robot $i\in\mathcal{I}$ shares $[\mf{y}_{i,0}(t)^T, l_{i,0}(t)]^T\in\mathbb{R}^{\mfs{N}_{\mf{a}}}, \mfs{N}_{\mf{a}}=4$}. The outputs of MDVO are
\begin{equation}
\label{eq:ratio}
\hat{\mf{p}}_{i,\mu}(t) := \frac{\mf{y}_{i,\mu}(t)}{\max\left( l_{i,0}(t), 1/\mfs{N}_{\max}\right)}, i\in\mathcal{I}, \mu\in\{0,\dots,m\}
\end{equation}
with a maximum allowed number of nodes $\mfs{N}_{\max}\geq \mfs{N}$ in the network known by all robots. In order to ensure convergence of $\mathcal{M}_{\mf{s}}$, the leaders $\mf{p}_{i}(t), i\in\mathcal{I}_{{\mfs{L}}}$ cooperate such that the following is complied.
\begin{assumption}
\label{as:motion}
Given known $L_0,\dots,L_{m+1}>0$, the motion of any leader robot $i\in\mathcal{I}_{\mfs{L}}$ is constrained as
$
\|\mf{p}_i^{(\mu)}(t)\|_\infty\leq L_\mu, \forall \mu\in\{0,\dots,m+1\}
$
with $t\in[0,T_c]$ and
$
\|\mf{p}_i^{(m+1)}(t)\|_\infty\leq L_{m+1}$ with $t\geq T_c$.
\end{assumption}
\begin{theorem}
\label{th:ratio}
Let Assumption \ref{as:motion} \RevSeven{hold}, $\mathcal{G}$ be a fixed connected graph and a maximum allowed number of nodes $\mfs{N}_{\max}\geq \mfs{N}$ in the network known by all robots.
Moreover, let the instances of $\mathcal{M}_{\mf{s}}$ be designed as in Theorem \ref{th:medcho} for $L_0,\dots,L_{m+1}$ from Assumption \ref{as:motion}. Let the instance of $\mathcal{M}_{\mf{\ell}}$ designed as $\mathcal{M}_{\mf{s}}$ except that 
$\theta = \frac{K_{m+1}}{T_{\min}^{m-\nu+1}}$
is chosen instead. Hence, this configuration along with outputs $\hat{\mf{p}}_{i,\mu}(t), \mu\in\{0,\dots,m\}$ as in \eqref{eq:ratio}
is a DVO for $\mf{\bar{p}}(t)$ with prescribed convergence time $T_c$ in the sense of Definition \ref{def:rt_dvo}.
\end{theorem}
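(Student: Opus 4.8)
The plan is to obtain the statement as a corollary of Theorem~\ref{th:medcho}: I would apply it to each of the four scalar \eqref{eq:medcho}-blocks making up the MDVO, and then simply read off the ratio \eqref{eq:ratio} after the common deadline $T_c$.

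First I would check that Assumption~\ref{as:motion} supplies the hypotheses of Assumption~\ref{as:all_bounded} for every block. Each of the three scalar blocks of $\mathcal{M}_{\mf{s}}$ carries one coordinate of $\mf{s}_i(t)$; since $\mf{s}_i(t)=\mf{p}_i(t)$ for $i\in\mathcal{I}_{\mfs{L}}$ and $\mf{s}_i(t)\equiv 0$ for $i\in\mathcal{I}_{\mfs{F}}$, the corresponding average is $\mfb{s}^{(\mu)}(t)=(1/\mfs{N})\sum_{j\in\mathcal{I}_{\mfs{L}}}\mf{p}_j^{(\mu)}(t)$, and since $\|\mf{p}_j^{(\mu)}(t)\|_\infty\le L_\mu$ bounds every coordinate, the triangle inequality shows each such block meets Assumption~\ref{as:all_bounded} with constants of the order of $L_\mu$ on $[0,T_c]$ (and of $L_{m+1}$ for $\mu=m+1$ on $t\ge T_c$). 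For $\mathcal{M}_{\ell}$ the signals $\ell_i$ are constant, so $\ell_i^{(\mu)}=\bar\ell^{(\mu)}=0$ for all $\mu\ge 1$ while $|\bar\ell-\ell_i|=|\mfs{N}_{\mfs{L}}/\mfs{N}-\ell_i|\le 1$; thus Assumption~\ref{as:all_bounded} holds for $\mathcal{M}_{\ell}$ with $L_0=1$ and $L_1=\dots=L_{m+1}=0$, for which only the $\nu=0$ term of \eqref{eq:gain} survives and the gain collapses to exactly the value $\theta$ prescribed for $\mathcal{M}_{\ell}$ in the statement. Theorem~\ref{th:medcho} then applies to all four blocks and yields, for every $i\in\mathcal{I}$, every $\mu\in\{0,\dots,m\}$ and every $t\ge T_c$,
\[
\mf{y}_{i,\mu}(t)=\mfb{s}^{(\mu)}(t):=\frac{1}{\mfs{N}}\sum_{j\in\mathcal{I}_{\mfs{L}}}\mf{p}_j^{(\mu)}(t),\qquad l_{i,0}(t)=\bar\ell=\frac{\mfs{N}_{\mfs{L}}}{\mfs{N}} .
\]

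It then remains to evaluate \eqref{eq:ratio}. Because $\mfs{N}_{\mfs{L}}\ge 1$ and $\mfs{N}\le\mfs{N}_{\max}$, for $t\ge T_c$ we have $l_{i,0}(t)=\mfs{N}_{\mfs{L}}/\mfs{N}\ge 1/\mfs{N}\ge 1/\mfs{N}_{\max}$, so the saturation in the denominator of \eqref{eq:ratio} is inactive and $\max(l_{i,0}(t),1/\mfs{N}_{\max})=\mfs{N}_{\mfs{L}}/\mfs{N}$ (the $\max$ merely keeps \eqref{eq:ratio} well defined on the transient $[0,T_c)$ and plays no role afterwards). Using $\mfb{s}^{(\mu)}(t)=(\mfs{N}_{\mfs{L}}/\mfs{N})\,\mfb{p}^{(\mu)}(t)$, which follows from \eqref{eq:correct_average}--\eqref{eq:averages}, I get for all $t\ge T_c$, $i\in\mathcal{I}$ and $\mu\in\{0,\dots,m\}$ that $\hat{\mf{p}}_{i,\mu}(t)=(\mfs{N}/\mfs{N}_{\mfs{L}})\,\mf{y}_{i,\mu}(t)=\mfb{p}^{(\mu)}(t)$, which is precisely the prescribed-convergence-time property of Definition~\ref{def:rt_dvo}; moreover each robot transmits only $[\mf{y}_{i,0}(t)^T,l_{i,0}(t)]^T\in\mathbb{R}^4$, so $\mfs{N}_{\mf{a}}=4$ as claimed, and all four blocks use the common parameters fixed beforehand. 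The argument is essentially bookkeeping on top of Theorem~\ref{th:medcho}, so I do not expect a genuinely hard step; the only points that deserve care are verifying that the degenerate bounds $1,0,\dots,0$ reduce \eqref{eq:gain} to the stated gain of $\mathcal{M}_{\ell}$, and the inequality $\mfs{N}_{\mfs{L}}/\mfs{N}\ge 1/\mfs{N}_{\max}$ that allows the saturation in \eqref{eq:ratio} to be discarded after $T_c$.
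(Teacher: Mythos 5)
Your proposal follows essentially the same route as the paper's proof: verify that Assumption~\ref{as:motion} yields Assumption~\ref{as:all_bounded} for the three blocks of $\mathcal{M}_{\mf{s}}$ and (with $L_0=1$, $L_\mu=0$ for $\mu\geq 1$) for $\mathcal{M}_{\ell}$, invoke Theorem~\ref{th:medcho} on all four blocks, and then discard the saturation in \eqref{eq:ratio} via $\mfs{N}_{\mfs{L}}/\mfs{N}\geq 1/\mfs{N}_{\max}$ to recover $\bar{\mf{p}}^{(\mu)}(t)$ for $t\geq T_c$. The only difference is cosmetic: you are slightly more explicit than the paper about the triangle-inequality step relating the leaders' bounds to the consensus-error bounds of Assumption~\ref{as:all_bounded}, which the paper asserts without detail.
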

\begin{proof}
The proof can be found in Appendix \ref{ap:ratio}.
\end{proof}

\begin{remark}
\label{rem:nmax}
Theorem \ref{th:ratio} assumes knowledge of $\mfs{N}_{\max}$ which \RevSix{can be overestimated by anything arbitrarily bigger than $\mfs{N}$}. This avoids any division by zero in \eqref{eq:ratio}. This assumption is not restrictive in many situations where the communication link between robots only allow a fixed maximum number of identification numbers for each node e.g., at the data-link layer. However, if a value of $\mfs{N}_{\max}$ is unavailable, \eqref{eq:ratio} may be computed directly as $\hat{\mf{p}}_{i,\mu}(t) = \mf{y}_{i,\mu}(t)/l_{i,0}(t)$ for $t\geq T_c$.
\end{remark}
\begin{remark}
Note from \eqref{eq:medcho} that if for some $i\in\mathcal{I}$, $s_i(t)= 0, \forall t\geq 0$, the output comply $y_{i,\mu}(t) = s^{(\mu)}_i(t)-x_{i,\mu}(t)=-x_{i,\mu}(t)$ leading to $x_{i,\mu}(t) = -\bar{s}^{(\mu)}(t), \forall t\geq T_c$ without the need of any additional differentiator to compute $s^{(\mu)}_i(t)$. This is the case of the ratio-consensus strategy for all followers $i\in\mathcal{I}_{\mfs{F}}$ where $s_i^{\mfs{X}}(t) = s_i^{\mfs{Y}}(t) = s_i^{\mfs{Z}}(t) = \ell_i = 0, \forall t\geq 0$.
\end{remark}

\subsection{Formation tracking}
\label{sec:control}
Access to the signal $\mf{\bar{p}}(t)$ and its first $m$ derivatives through outputs $\hat{\mf{p}}_{i,0}(t),\dots ,\hat{\mf{p}}_{i,m}(t), i\in\mathcal{I}$ of the MDVO enables the design of local controllers $\mf{u}_i(t), i\in\mathcal{I}_{\mfs{F}}$ in \eqref{eq:dynamics} such that multi-leader formation tracking is achieved. Recall the fixed formation positions $\mf{d}_i,i\in\mathcal{I}_{\mfs{F}}$ such that the proposed control design has the form
\begin{equation}
\label{eq:controller}
\begin{aligned}
    \mf{u}_i(t) &= \hat{\mf{p}}_{i,m}(t) - \rho_0(\mf{p}_i(t) - \mf{\hat{p}}_{i,0}(t) - \mf{d}_i)  \\ 
    &- \sum_{\mu=1}^{m-1} \rho_\mu( \mf{p}_i^{(\mu)}(t) -  \mf{\hat{p}}_{i,\mu}(t))
    \end{aligned}
\end{equation}
for $i\in\mathcal{I}_{\mfs{F}}$, where $\rho_0,\dots,\rho_{m-1}>0$ are chosen such that  $\lambda^{m}+\sum_{\mu=0}^{m-1}\rho_{\mu-1}\lambda^{\mu-1}
$ have roots in the left half plane. 
\begin{corollary}
\label{cor:controller}
Let MDVO designed as in Theorem \ref{th:ratio} and Assumption \ref{as:motion} hold. Then, \eqref{eq:controller} makes the followers achieve multi-leader formation tracking as in Definition \ref{def:ml_tracking}.
\end{corollary}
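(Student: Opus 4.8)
The plan is to reduce the claim, for times $t\geq T_c$, to the exponential stability of a Hurwitz linear system. Once the MDVO has converged exactly, the feedback \eqref{eq:controller} turns each follower's formation error into the output of a stable homogeneous linear ODE, whose states then decay to the origin; the only point requiring care is that this ODE starts from finite initial data, i.e.\ that nothing escapes to infinity on $[0,T_c]$.

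Concretely, I would first fix $i\in\mathcal{I}_{\mfs{F}}$ and introduce the error $\mf{e}_i(t):=\mf{p}_i(t)-\mf{\bar{p}}(t)-\mf{d}_i$. Since $\mf{d}_i$ is constant and $\mf{\bar{p}}$ is $(m+1)$-times differentiable, $\mf{e}_i^{(\mu)}=\mf{p}_i^{(\mu)}-\mf{\bar{p}}^{(\mu)}$ for $1\leq\mu\leq m$, and by \eqref{eq:dynamics}, $\mf{e}_i^{(m)}=\mf{u}_i-\mf{\bar{p}}^{(m)}$. By Theorem \ref{th:ratio} the MDVO, designed as stated, is a DVO for $\mf{\bar{p}}(t)$ with prescribed convergence time $T_c$, so $\hat{\mf{p}}_{i,\mu}(t)=\mf{\bar{p}}^{(\mu)}(t)$ for all $t\geq T_c$ and all $\mu\in\{0,\dots,m\}$. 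Substituting these identities into \eqref{eq:controller} for $t\geq T_c$ replaces $\mf{p}_i-\hat{\mf{p}}_{i,0}-\mf{d}_i$ by $\mf{e}_i$, each $\mf{p}_i^{(\mu)}-\hat{\mf{p}}_{i,\mu}$ by $\mf{e}_i^{(\mu)}$, and $\hat{\mf{p}}_{i,m}$ by $\mf{\bar{p}}^{(m)}$; plugging the resulting $\mf{u}_i$ into $\mf{e}_i^{(m)}=\mf{u}_i-\mf{\bar{p}}^{(m)}$ cancels the feedforward term and leaves, componentwise, the scalar ODE $\mf{e}_i^{(m)}+\sum_{\mu=1}^{m-1}\rho_\mu\mf{e}_i^{(\mu)}+\rho_0\mf{e}_i=0$ for $t\geq T_c$, whose characteristic polynomial $\lambda^m+\sum_{\mu=0}^{m-1}\rho_\mu\lambda^\mu$ is Hurwitz by the choice of $\rho_0,\dots,\rho_{m-1}$. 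Standard linear-systems theory then yields $\|\mf{e}_i(t)\|\to 0$ exponentially as $t\to\infty$ from any finite initial data at $t=T_c$, i.e.\ the property of Definition \ref{def:ml_tracking}.

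The step I expect to be the actual work is showing that the initial data $\mf{p}_i(T_c),\dot{\mf{p}}_i(T_c),\dots,\mf{p}_i^{(m-1)}(T_c)$ is finite, equivalently that there is no finite escape on $[0,T_c]$. For this I would check that on $[0,T_c]$ the MDVO outputs $\hat{\mf{p}}_{i,\mu}$ are continuous and uniformly bounded: the internal states $x_{i,\mu}$ of each \eqref{eq:medcho} block are bounded by the Lyapunov estimates underlying Theorem \ref{th:medcho}, the modulated signals $\sigma_i^{(\mu)}$ of \eqref{eq:sigma} are bounded because $\kappa\in\mathcal{C}^{m+1}[0,\infty)$ and the $\mf{p}_i^{(\mu)}$ are bounded by Assumption \ref{as:motion} (with $\mf{s}_i\equiv 0$ for followers), and the denominator in \eqref{eq:ratio} is bounded below by $1/\mfs{N}_{\max}>0$. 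Hence the follower closed loop on $[0,T_c]$ is a linear time-varying system $\dot{\xi}_i=A\xi_i+b_i(t)$ with $\xi_i=[\mf{p}_i^T,\dot{\mf{p}}_i^T,\dots,(\mf{p}_i^{(m-1)})^T]^T$, $A$ the companion matrix built from $\rho_0,\dots,\rho_{m-1}$, and bounded forcing $b_i$, so $\xi_i(T_c)$ is finite; moreover $\mf{u}_i$ is continuous, so $\mf{p}_i\in\mathcal{C}^m$ and the $t\geq T_c$ argument applies with well-defined initial conditions. The Hurwitz decay itself is routine; the care lies in this boundedness bookkeeping on $[0,T_c]$, and in particular in invoking the boundedness of the \eqref{eq:medcho} states from the proof of Theorem \ref{th:medcho}.
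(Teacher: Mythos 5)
Your proposal is correct and follows essentially the same route as the paper's proof: first rule out finite-time escape on $[0,T_c]$ by the boundedness of the MDVO outputs (the paper invokes input-to-state stability of the linear follower dynamics where you spell out the bounded-forcing LTV argument), then for $t\geq T_c$ use the exact convergence from Theorem \ref{th:ratio} to reduce the closed loop to the Hurwitz error dynamics $\mf{e}_i^{(m)}=-\sum_{\mu=0}^{m-1}\rho_\mu\mf{e}_i^{(\mu)}$. The only difference is expository detail; no change in substance.
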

\begin{proof}
\RevSix{First, note that each instance of \eqref{eq:medcho} in the MDVO is finite-time stable as a result of Theorem \ref{th:medcho}, thus having bounded outputs for $t\in[0,T_c]$. As a result, the output \eqref{eq:ratio} of MDVO is bounded for all $t\in[0,T_c]$ as well. Hence, the linear system \eqref{eq:dynamics} being input to state stable does not exhibit any finite-time escape before $t=T_c$. Now, let $t\geq T_c$ and note that the separation principle is complied since Theorem \ref{th:ratio} implies $\hat{\mf{p}}_{i,m}(t)=\mf{\bar{p}}^{(\mu)}(t), \forall t\geq T_c$} with $\mf{\bar{p}}(t)$ from \eqref{eq:correct_average}. Set $\mf{e}_i(t)=\mf{p}_i(t) - \mf{\bar{p}}(t)- \mf{d}_i$ with closed-loop error dynamics for $i\in\mathcal{I}_{\mfs{F}}$ complying
$
\mf{e}_i^{(m)}(t) = -\sum_{\mu=0}^{m-1} \rho_\mu \mf{e}_i^{(\mu)}(t)
$
for $t\geq T_c$, being asymptotically stable towards the origin. Thus, $\left\|\mf{p}_i(t) - \mf{\bar{p}}(t) - \mf{d}_i\right\|\to 0$ as $t\to \infty$.
\end{proof}

\begin{remark}
\label{rem:extension}
In this work, a linear trajectory tracking controller \eqref{eq:controller} is considered. However, it is important to note that once each agent is equipped with the value of the reference $\mf{\bar{p}}(t)$ along with its derivatives as computed from the MDVO, the employment of a wide variety of trajectory tracking controllers for integrator systems is enabled. In particular, it enables the usage of controllers which incorporate barrier functions for collision avoidance as in \cite{panagou} \RevOne{or to reject disturbances as in \cite{tbg}}.
\end{remark}

\section{Simulation examples}
\label{sec:examples}
In this section, we assume all robots of order $m=3$. Consensus blocks from \eqref{eq:medcho} are configured with $m=3$, gains $k_0=7.5, k_1=19.25, k_2=17.75, k_3=7$ taken from \cite{edcho} and the modulating function in \eqref{eq:interpolator} from Appendix \ref{ap:modulating} for $m=3$. Similarly, all robots use the controller \eqref{eq:controller} \RevSix{for all $t\geq 0$}. Consider $T_{\min}=T_c=0.5$, \RevSix{$\mfs{N}=8$ and a very overestimated $N_{\max}=10^3\geq \mfs{N}$}. Simulations where performed using the explicit Euler method with time step $\Delta t=10^{-6}$.

\subsection{Herding with cooperative sheep}
\label{sec:shepherd}
Consider a team of $\mfs{N}_{\mfs{L}}=3$ shepherds and $\mfs{N}_{\mfs{F}}=5$ sheep robots with initial positions and communication topology shown in Figure \ref{fig:sheep_RADVO}-a). The goal is for the shepherds to move along a trajectory, dragging the sheep robots along a fixed circular formation around the geometric center of the shepherd positions. Figure \ref{fig:sheep_RADVO}-b) shows the signals $l_{i,0}(t)$, computed in the MDVO for all robots. The signals start at consensus and converge to $\mfs{N}_{\mfs{L}}/\mfs{N} = 3/8$ before the deadline $T_c=0.5$. Similarly, as shown in Figure \ref{fig:sheep_RADVO}-b) the signals $\mf{\hat{p}}_{i,0}(t)$ converge to $\mf{\bar{p}}(t)$ exactly at the deadline $T_c=0.5$ as well. The closed loop formation error dynamics when using the controller \eqref{eq:controller} converges to the origin as depicted in Figure \ref{fig:sheep_RADVO}-d), such that the final formation is achieved.

\begin{figure}
    \centering
    \vspace{0.5em}
    \includegraphics[width=0.4\textwidth]{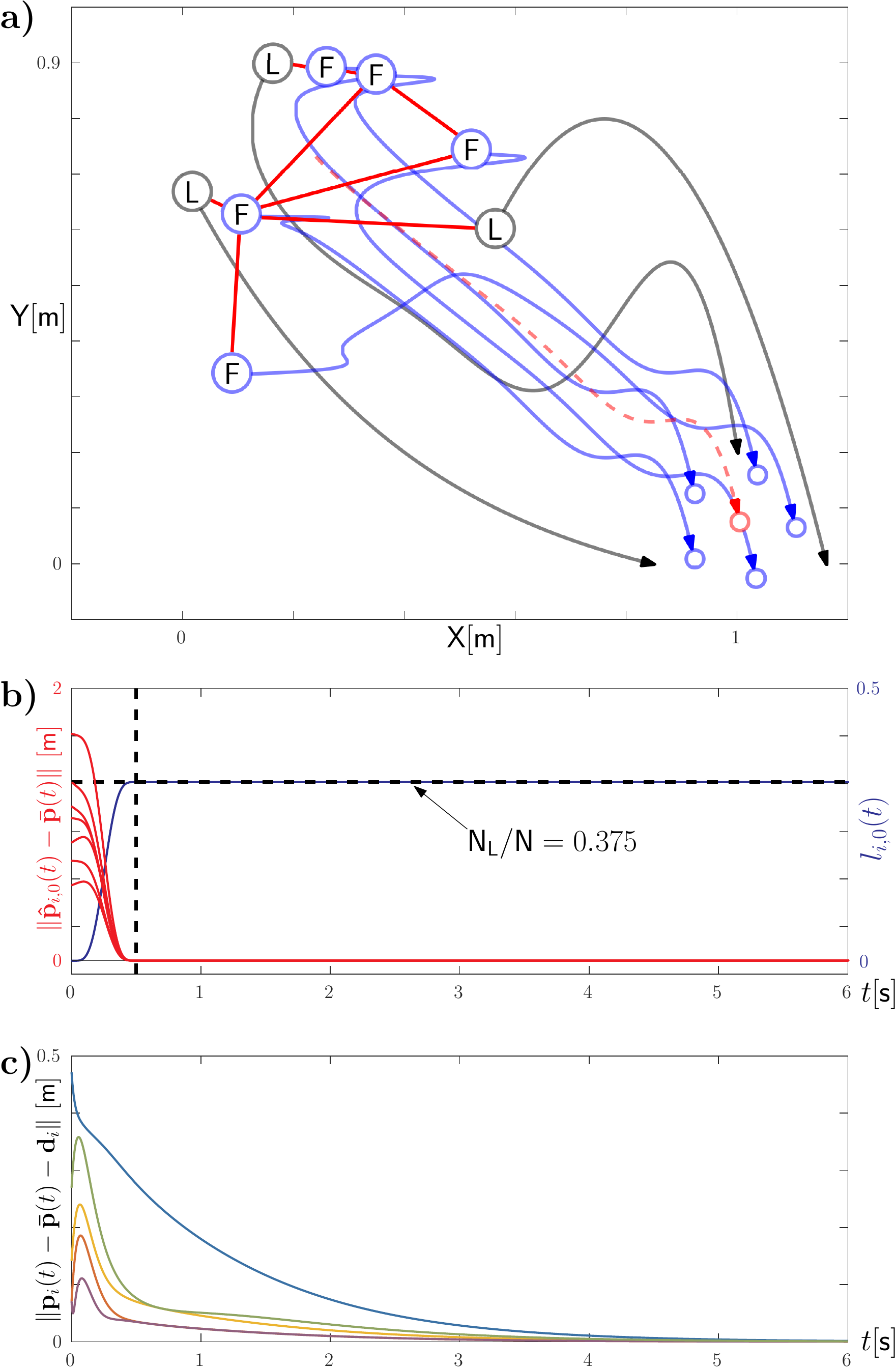}
    \caption{Experimental results for the formation control as in of Section \ref{sec:shepherd}. a) Nodes labeled with $\mfs{L}$ are leader robots, whereas $\mfs{F}$ are followers. Continuous lines show the trajectory in the $\mfs{XY}$ plane of all robots. Dotted red line shows the trajectory of the geometric center of the leaders $\bar{\mf{p}}(t)$. Continuous red lines show communication links between the robots. b) The signal $l_{i,0}(t)$ (blue) in the MDVO is shown for all agents, achieving the correct value of $\mfs{N}_{\mfs{L}}/\mfs{N}=3/8$. All agents maintain consensus from the start due to the modulating function. The dotted black line denotes the deadline at $T_c=0.5$. The consensus error is depicted (red), reaching the origin before the deadline. c) The formation error is shown, reaching the origin asymptotically and the followers achieving the formation as observed in a). }
    \label{fig:sheep_RADVO}
\end{figure}

\subsection{Cooperative target estimation with \RevOne{uncertain measurements}}
\label{sec:target}
Consider $\mfs{N}_{\mfs{L}}=3$ robots equipped with sensors capable of detecting a moving target of interest $\mf{p}_{\mfs{T}}(t)=[x_{\mfs{T}}(t),y_{\mfs{T}}(t),z_{\mfs{T}}(t)]^T$ with $m$-th order dynamics. On the other hand, consider $\mfs{N}_{\mfs{F}}=5$ auxiliary robots which cannot detect the target, but are required to follow a formation around it. In this case, the leaders instead of sharing their position, share their estimate of the target position as $\mf{p}_i(t)=\mf{p}_{\mfs{T}}(t)+\mf{n}_i(t)$ impregnated with noise $\mf{n}_i(t)$. The signal $\mf{n}_i(t)$ has components sampled from a Gaussian distribution with zero mean and variance 1 at each time step, modeling detection errors. Hence, all robots run the MDVO in order to obtain the averaged target estimate $\mf{\bar{p}}(t)$, which is an improved global version of the individual estimates $\mf{p}_i(t)$. For the sake of brevity, Figure \ref{fig:target} only shows the performance in the estimation of the signal $\mf{p}_{\mfs{T}}(t)$ for the $\mfs{X}$ coordinate. Note that the average error for the $\mfs{X}$ coordinate estimation at the leaders $|s^{\mfs{X}}_{i}(t)-x_{\mfs{T}}(t)|$ is $0.7986, 0.797, 0.746$ respectively. However, the average error for the signal $|\mf{\hat{p}}^{\mfs{X}}_{i,0}(t)-x_{\mfs{T}}(t)|$ is of $0.4769, 0.4676, 0.4677, 0.4630, 0.4763$ respectively at the followers \RevSeven{where a noise attenuation of $0.58\approx 1/\sqrt{\mfs{N}_{\mfs{L}}}=1/\sqrt{3}$ is observed} as expected from an averaged estimate.

\begin{figure}
    \centering
    \vspace{1em}
    \includegraphics[width=0.4\textwidth]{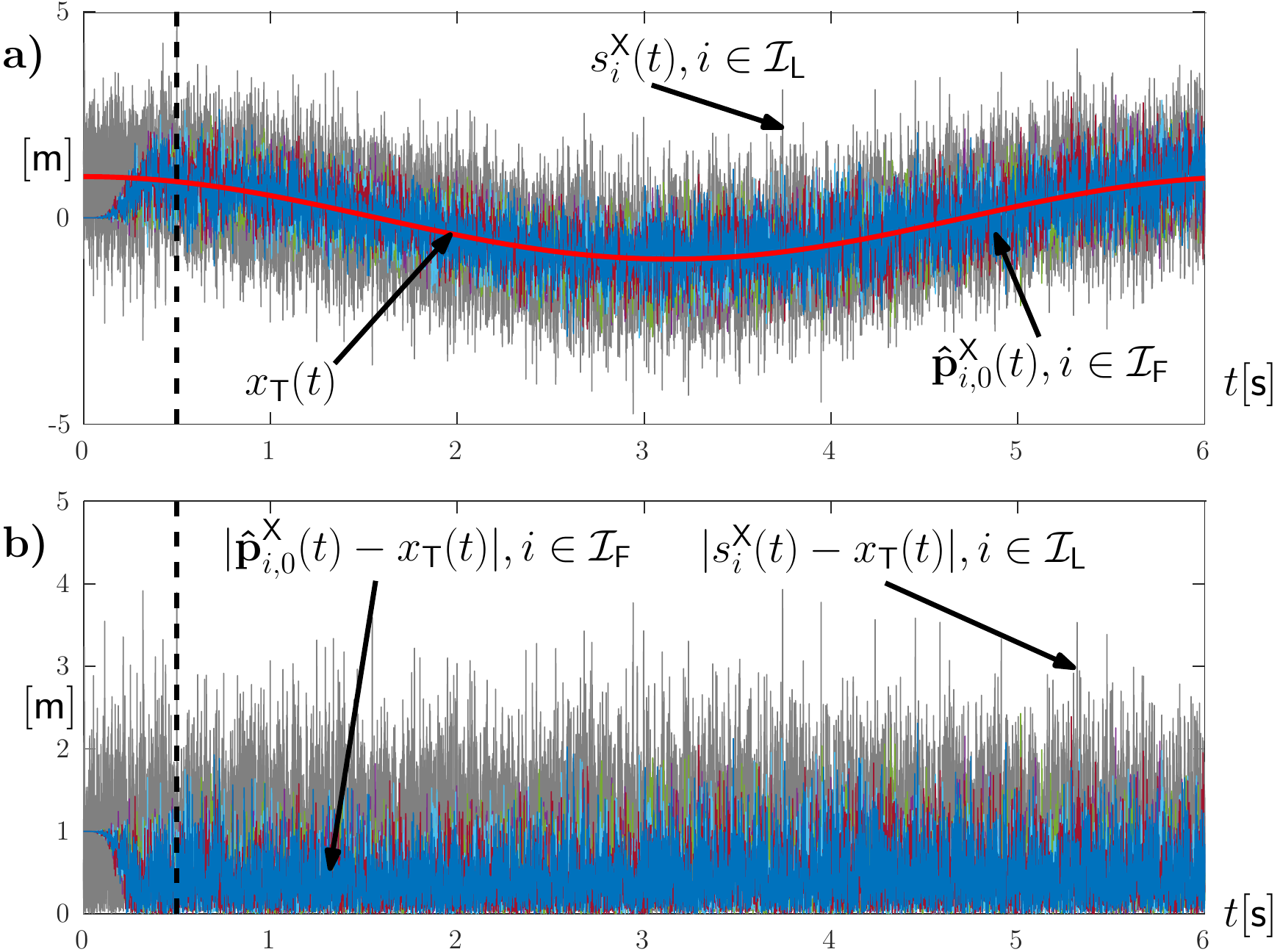}
    \caption{Experimental results for the cooperative target estimation as in Section \ref{sec:target}. a) Shows the target trajectory $x_\mfs{T}(t)$  in the $\mfs{X}$ axis (red), the $\mfs{N}_{\mfs{L}} = 3$ leader estimations $s_i^{\mfs{X}}(t)$ (grey) and the collective estimations $\mf{p}_{i,0}^{\mfs{X}}(t)$ (color) at the followers. b) Shows the estimation error for such signals, \RevSeven{where a noise attenuation of $0.58\approx 1/\sqrt{\mfs{N}_{\mfs{L}}}=1/\sqrt{3}$ is observed.}}
    \label{fig:target}
\end{figure}

\section{\RevAll{Comparison with related works}}
\label{sec:comparison}
\RevAll{For comparison, first note that the leader-follower strategies based in autonomous fixed-time consensus protocols in \cite{defoort2015, trujillo2020, fx1, fx2, fx3} cannot be extended to the multi-leader setting and have a greatly overestimated settling time estimation in the high order case \cite{fx1,fx2,fx3}, leading to over-engineered solutions in practice. In addition, it is usual for these approaches to require knowledge of global information such as the algebraic connectivity of the network in order to set the convergence time bound. In contrast, the proposed MDVO works in a multi-leader setting with a non-conservative settling time estimate due to the use of the modulating function in \eqref{eq:medcho} and as shown in Figure \ref{fig:sheep_RADVO}, without requiring knowledge of the number of leaders or the algebraic connectivity of $\mathcal{G}$.}

\RevAll{The works \cite{ning2019,zhao2019,trujillo2020, holeader, holeaderevent} use time-dependent prescribed-time consensus protocols that require persistently growing unbounded gains, problematic to implement in practice. Similarly, MDVO is based on a time-dependent protocol due to use of the modulating function. However, we do not require unbounded gains in \eqref{eq:medcho}. In \cite{zeno} a time-varying protocol is used for a multi-leader problem. However, different from MDVO, such protocol requires a sequence of sampling instants inducing Zeno behaviour and complicating its discrete-time implementation under a lower bounded time step.}

\RevAll{It is important to note that the example in Section \ref{sec:shepherd} was chosen precisely since it cannot be solved by existing approaches as a whole.  The reason is that other leader-follower approaches \cite{ning2019,zhao2019,defoort2015, trujillo2020, tbg, fx1, fx2, fx3, zeno, holeader, holeaderevent}, cannot be directly extended to handle the computation of the geometric center of multiple leaders, or to dynamics of order higher than 2. Moreover, note that the same observer structure of MDVO is used in Section \ref{sec:target} for a different problem than in Section \ref{sec:shepherd}, which highlights the versatility of our proposal in contrast to prior work. Another important improvement of MDVO is that the number of numbers shared between agents $\mfs{N}_\mfs{a}$ remains the same regardless of the system order $m$. In contrast, other high-order techniques \cite{fx1,fx2,fx3,tbg,zeno,holeader, holeaderevent} share a vector of a size growing with $m$.}

\section{Conclusion}
An MDVO was presented for its use in multi-leader formation tracking of \RevSix{holonomic mobile robots}. It was shown that the proposed MDVO, based in distributed dynamic consensus blocks, is able to compute the geometric center of the average of the leader positions in a prescribed convergence time. The versatility of the proposal was shown in different application scenarios for multi-leader formation control. \RevOne{This discussion motivate the formal analysis of the MDVO under communication delays in a future work.}

\appendix

\subsection{Examples of modulating functions}
\label{ap:modulating}

Given $m\in\mathbb{N}$, an $m$-th order modulating function $\kappa(t)$ can be constructed as follows. First, define $\bm{\upkappa}(t) = [\kappa^{(0)}(t),\dots, \kappa^{(m)}(t)]^T$. Moreover, let $A\in\mathbb{R}^{(m+1)\times (m+1)},B\in\mathbb{R}^{(m+1)\times 1}$ be the appropriate matrices modeling an $(m+1)$-th order integrator such that
\begin{equation}
\label{eq:kappa_sys}
\dot{\bm{\upkappa}}(t) = A{\bm{\upkappa}}(t) + B\kappa^{(m+1)}(t), \ \bm{\upkappa}(0)=0
\end{equation}
Furthermore, let $W =\int_0^1 \exp(A^T\tau)BB^T\exp(A\tau)\text{d}\tau$, ${\bm{\upkappa}}_f = [1,0,\dots,0]^T$ and 
\begin{equation}
\label{eq:interpolator}
    \kappa^{(m+1)}(t) = B^T\exp(A^T(1-t))W^{-1}{\bm{\upkappa}}_f
\end{equation}
Hence, $\kappa(t)$ is obtained through repeated integration of \eqref{eq:interpolator}.
\begin{proposition}
The solution of \eqref{eq:kappa_sys} for $t\in[0,1]$ with $\kappa^{(m+1)}(t)$ given in \eqref{eq:interpolator} in combination with $\kappa(t):=1, \forall t\geq 1$ is a modulating function of order $m$.
\end{proposition}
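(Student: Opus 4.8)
The plan is to treat $\kappa$ as a function defined in two pieces — on $[0,1]$ the polynomial obtained by integrating \eqref{eq:interpolator} a total of $m+1$ times subject to $\bm{\upkappa}(0)=0$, and on $[1,\infty)$ the constant $1$ — and to verify the three ingredients of Definition \ref{def:modulating}: well-posedness of the formula, the boundary values at $t=0$ and $t=1$, and the global regularity $\mathcal{C}^{m+1}[0,\infty)$. First, since $A,B$ model a chain of $m+1$ integrators the pair is controllable over $[0,1]$, so the Gramian $W$ in \eqref{eq:interpolator} is positive definite and hence invertible, and the formula is well defined. Moreover $A^{m+1}=0$, so $\exp(A^T(1-t))$ is polynomial in $t$ of degree at most $m$; thus $\kappa^{(m+1)}$ is a polynomial and, after $m+1$ integrations, $\kappa|_{[0,1]}$ is a polynomial of degree at most $2m+1$. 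In particular $\kappa$ is $\mathcal{C}^\infty$ separately on $[0,1]$ and on $[1,\infty)$, so the only place where regularity can fail is the junction $t=1$.

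Next I would read off the boundary values. The initial condition $\bm{\upkappa}(0)=0$ of \eqref{eq:kappa_sys} is exactly $\kappa(0)=0$ together with $\kappa^{(\mu)}(0)=0$ for $\mu\in\{1,\dots,m\}$. For the terminal values, substitute \eqref{eq:interpolator} into the variation-of-constants formula $\bm{\upkappa}(1)=\int_0^1\exp(A(1-\tau))B\,\kappa^{(m+1)}(\tau)\,\nd\tau$ and use the definition of $W$: the integral collapses to $W W^{-1}\bm{\upkappa}_f=\bm{\upkappa}_f=[1,0,\dots,0]^T$, which is the standard minimum-energy steering identity. Reading off components gives $\kappa(1^-)=1$ and $\kappa^{(\mu)}(1^-)=0$ for $\mu\in\{1,\dots,m\}$; combined with $\kappa\equiv1$ on $[1,\infty)$ these are precisely the normalization $\kappa(t)=1$ for $t\geq1$ and the endpoint conditions $\kappa^{(\mu)}(1)=0$ of Definition \ref{def:modulating}.

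Finally I would glue the two pieces at $t=1$. For $\mu\in\{0,\dots,m\}$ the left limit of $\kappa^{(\mu)}$ computed above matches the right value ($1$ for $\mu=0$, $0$ otherwise, i.e. the derivatives of the constant), so $\kappa$ is at least $\mathcal{C}^{m}$ on $[0,\infty)$. I expect the continuity of the top derivative $\kappa^{(m+1)}$ across $t=1$ to be the main obstacle: from the right it vanishes, so one must pin down $\kappa^{(m+1)}(1^-)$, which by \eqref{eq:interpolator} equals $B^T W^{-1}\bm{\upkappa}_f$. I would address this using the explicit structure of the minimum-energy control together with the fact that $\bm{\upkappa}_f$ has only its first entry nonzero; and, as a fallback, note that $\kappa^{(m+1)}$ is a polynomial on $[0,1]$ and hence bounded, which is what is actually needed when this $\kappa$ is fed through \eqref{eq:sigma} into the stability analysis of Theorem \ref{th:medcho}. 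Assembling the three ingredients then shows the constructed $\kappa$ satisfies every requirement of an $m$-th order modulating function.
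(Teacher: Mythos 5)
Your argument follows the same route as the paper's one-line proof: identify $W$ as the controllability Gramian of the integrator chain on $[0,1]$ and \eqref{eq:interpolator} as the minimum-energy control steering $\bm{\upkappa}$ from $0$ to $\bm{\upkappa}_f$, which delivers all the endpoint conditions of Definition \ref{def:modulating} and $\mathcal{C}^{m}$ regularity across the junction $t=1$. You go further than the paper by isolating the continuity of the top derivative $\kappa^{(m+1)}$ at $t=1$ as the remaining obstacle, and you are right that this is where the argument gets stuck --- but your first proposed resolution cannot succeed. In general $\kappa^{(m+1)}(1^-)=B^TW^{-1}\bm{\upkappa}_f\neq 0$: the minimum-energy control constrains only the terminal \emph{state} $[\kappa,\dots,\kappa^{(m)}]^T$, not the terminal value of the \emph{input} $\kappa^{(m+1)}$, and the paper's own $m=1$ example confirms this, since $\ddot{\kappa}(t)=6-12t$ gives $\ddot{\kappa}(1^-)=-6$ while $\ddot{\kappa}(1^+)=0$. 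Consequently the glued function is $\mathcal{C}^{m}$ but not $\mathcal{C}^{m+1}$ on $[0,\infty)$, so the proposition as literally stated (Definition \ref{def:modulating} demands $\mathcal{C}^{m+1}[0,\infty)$ membership) does not hold; the paper's proof simply does not address this point.

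Your fallback observation is the correct diagnosis of why nothing downstream breaks: the bound on $|\sigma_i^{(m+1)}(t)-\sigma_j^{(m+1)}(t)|$ in the proof of Theorem \ref{th:medcho} only requires $\kappa^{(m+1)}$ to exist piecewise, be bounded by $K_{m+1}$ on $[0,1]$, and vanish for $t>1$, all of which the construction provides. The clean repairs are either to relax Definition \ref{def:modulating} to $\mathcal{C}^{m}$ functions with bounded piecewise-continuous $(m+1)$-th derivative, or to modify the construction to steer the extended state $[\kappa,\dots,\kappa^{(m+1)}]^T$ of an $(m+2)$-integrator chain to $[1,0,\dots,0]^T$, which additionally forces $\kappa^{(m+1)}(1^-)=0$. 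In summary: your proof is correct and complete through the $\mathcal{C}^{m}$ gluing, and the step you could not close is a defect of the statement rather than of your argument; you should not attempt to prove $B^TW^{-1}\bm{\upkappa}_f=0$, since it is false.
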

\begin{proof}
The proof follows directly by identifying $W$ as the controlability Gramian \cite[Definition 2.12, page 240]{gramian} for \eqref{eq:kappa_sys} in the interval $[0,1]$ and $\kappa^{(m+1)}(t)$ obtained from \cite[Corollary 2.14, page 238]{gramian} to reach ${\bm{\upkappa}}(1)={\bm{\upkappa}}_f$ starting from ${\bm{\upkappa}}(0) = 0$, which comply with Definition \ref{def:modulating}.
\end{proof}
As an example, consider $m=1$, such that \eqref{eq:interpolator} lead to
$
\ddot{\kappa}(t) = 6-12t, 
\dot{\kappa}(t) = 6t-6t^2, 
\kappa(t) = 3t^2 - 2t^3
$
for $t\in[0,1]$ in combination with $\kappa(t)=1, \forall t\geq 1$. Hence, $\kappa(t)$ verifies $\kappa(0)=\dot{\kappa}(0)=\dot{\kappa}(1)=0$ and $\kappa(1)=1$.

\subsection{The EDCHO protocol}
\label{ap:edcho}
The EDCHO protocol, presented in \cite{edcho}, is a distributed algorithm designed to compute $\bar{\sigma}_i(t) = (1/\mfs{N})\sum_{i=1}^\mfs{N}\tilde{\sigma}_i(t)$ and its first $m$ derivatives at each agent, where $\tilde{\sigma}_i(t)$ are local time-varying signals. EDCHO can be written as
\begin{equation}
\label{eq:main_algo}
\begin{aligned}
    \dot{\tilde{x}}_{i,\mu} &=  k_\mu \mbox{$\sum_{j=1}^{\mfs{N}}$}a_{ij}\lceil \tilde{y}_{i,0} - \tilde{y}_{j,0} \rfloor^{\frac{m-\mu}{m+1}} + \tilde{x}_{i,\mu+1} \\
    &\text{for }0\leq\mu < m\\
    \dot{\tilde{x}}_{i,m} &= k_m \mbox{$\sum_{j=1}^{\mfs{N}}$}a_{ij}\sgn{\tilde{y}_{i,0} - \tilde{y}_{j,0} }{0} \\
    \tilde{y}_{i,\mu} &= \tilde{\sigma}_i^{(\mu)} - \tilde{x}_{i,\mu}.
\end{aligned}
\end{equation}
where time-dependency is omitted for brevity, $a_{ij}\in\{0,1\}$ are the components of the adjacency matrix of $\mathcal{G}$.
\begin{assumption}
\label{as:bounded}
Given $L>0$, then if follows that
$
\left|\bar{\sigma}^{(m+1)}(t)-\tilde{\sigma}_i^{(m+1)}(t)\right|\leq L, \forall t\geq 0
$.
\end{assumption}
\begin{proposition}\cite[Adapted from Theorem 7]{edcho}
\label{prop:edcho}
Let Assumption \ref{as:bounded} \RevSeven{hold} for given $L$, $\sum_{i=1}^{\mfs{N}} \tilde{x}_{i,\mu}(0) = 0, \forall \mu\in\{0,\dots,m\}$ and a fixed connected communication network $\mathcal{G}$. Then, there exists a time $T>0$ which depends on the initial conditions ${\tilde{y}}_{i,\mu}(0), i\in\mathcal{I}$ and gains $k_0,\dots,k_m>0$ such that \eqref{eq:main_algo} comply $\tilde{y}_{i,\mu}(t)=\bar{\sigma}^{(\mu)}_i(t), \forall t\geq T, \mu\in\{0,\dots,m\}, i\in\mathcal{I}$.
\end{proposition}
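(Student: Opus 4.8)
The plan is to reduce the claim to a finite-time consensus problem for a graph-coupled homogeneous chain of integrators with a matched bounded disturbance, and then to invoke the homogeneity machinery underlying Theorem 7 of \cite{edcho}. First I would exploit the undirected structure of $\mathcal{G}$: since $a_{ij}=a_{ji}$ and $\sgn{-x}{\alpha}=-\sgn{x}{\alpha}$, every double sum $\sum_{i}\sum_{j}a_{ij}\sgn{\tilde{y}_{i,0}-\tilde{y}_{j,0}}{\alpha}$ vanishes. Summing the $\tilde{x}$-dynamics in \eqref{eq:main_algo} over $i$ therefore gives $\frac{\nd}{\nd t}\sum_i \tilde{x}_{i,\mu}=\sum_i\tilde{x}_{i,\mu+1}$ for $\mu<m$ and $\frac{\nd}{\nd t}\sum_i\tilde{x}_{i,m}=0$. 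With the hypothesis $\sum_i\tilde{x}_{i,\mu}(0)=0$, a backward induction on $\mu$ (starting from $\mu=m$) yields $\sum_i\tilde{x}_{i,\mu}(t)\equiv 0$ for all $t$, hence $\frac{1}{\mfs{N}}\sum_i\tilde{y}_{i,\mu}(t)=\bar{\sigma}^{(\mu)}(t)$ at all times. Thus it suffices to drive the disagreements to zero in finite time, since zero disagreement together with the correct average forces $\tilde{y}_{i,\mu}=\bar{\sigma}^{(\mu)}$.

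Next I would pass to the error coordinates $e_{i,\mu}:=\tilde{y}_{i,\mu}-\bar{\sigma}^{(\mu)}$. Using $\tilde{y}_{i,0}-\tilde{y}_{j,0}=e_{i,0}-e_{j,0}$ and $\tilde{x}_{i,\mu+1}=\tilde{\sigma}_i^{(\mu+1)}-\tilde{y}_{i,\mu+1}$, a direct substitution cancels the $\tilde{\sigma}_i^{(\mu+1)}$ terms and gives, for $0\leq\mu<m$,
\begin{equation}
\dot{e}_{i,\mu}=e_{i,\mu+1}-k_\mu\mbox{$\sum_{j=1}^{\mfs{N}}$}a_{ij}\sgn{e_{i,0}-e_{j,0}}{\frac{m-\mu}{m+1}},
\end{equation}
together with $\dot{e}_{i,m}=\delta_i-k_m\sum_{j}a_{ij}\sgn{e_{i,0}-e_{j,0}}{0}$, where $\delta_i:=\tilde{\sigma}_i^{(m+1)}-\bar{\sigma}^{(m+1)}$ satisfies $|\delta_i(t)|\leq L$ for all $t$ by Assumption \ref{as:bounded}. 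This is a distributed chain of integrators whose couplings are homogeneous sign-powers of the neighbor output differences, with a matched, uniformly bounded disturbance entering only at the top equation alongside the discontinuous term $\sgn{\cdot}{0}$.

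The core step is to certify finite-time stability of this disagreement system. I would assign the dilation weights $r_\mu=m+1-\mu$ to $e_{i,\mu}$, under which the unperturbed vector field ($\delta_i\equiv 0$) is homogeneous of degree $-1$ and the disturbance enters at weight $0$, matched with the $\sgn{\cdot}{0}$ term. Using connectivity of $\mathcal{G}$, the only invariant configuration on the zero-sum subspace pinned down by the conservation law is global agreement $e_{i,0}=e_{j,0}$, which forces $e_{i,\mu}=0$; a strict Lyapunov / LaSalle argument then yields asymptotic stability of this subspace, and negative homogeneity upgrades it to finite-time stability with a settling time continuous in, hence bounded by, the initial disagreement $\{e_{i,\mu}(0)\}$ fixed by $\tilde{y}_{i,\mu}(0)$. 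Finally, because the top equation carries the discontinuous gain $k_m\sgn{\cdot}{0}$, a higher-order sliding-mode argument shows that for $k_0,\dots,k_m>0$ chosen as in \cite{edcho} relative to $L$ the matched disturbance $\delta_i$ is \emph{exactly} rejected rather than merely attenuated, so the error reaches the origin in finite time $T$ and $\tilde{y}_{i,\mu}(t)=\bar{\sigma}^{(\mu)}(t)$ for all $t\geq T$. The main obstacle is precisely this combined homogeneity-plus-robust-exactness step on the graph-coupled system—establishing asymptotic stability of the agreement subspace for the Laplacian-weighted homogeneous chain and verifying that the sign term dominates the $L$-bounded perturbation—which is the substance of Theorem 7 of \cite{edcho} that I would invoke and adapt to the present notation.
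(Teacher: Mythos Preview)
The paper does not give its own proof of this proposition: it is stated as an adaptation of \cite[Theorem 7]{edcho} and is simply cited, with the gain design deferred to \cite[Section 6]{edcho}. Your proposal ultimately does the same thing---you explicitly invoke Theorem 7 of \cite{edcho} for the ``combined homogeneity-plus-robust-exactness step''---so at the level of what is actually being relied upon, your approach and the paper's coincide. The extra material you supply (the conservation law from the undirected structure, the passage to error coordinates $e_{i,\mu}=\tilde{y}_{i,\mu}-\bar{\sigma}^{(\mu)}$, the dilation weights $r_\mu=m+1-\mu$, and the matched-disturbance reading of $\delta_i$) is a correct and faithful unpacking of the machinery behind that theorem, but it is not an alternative route; it is a sketch of the very argument you then cite.
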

Recall that given a value of $L>0$, the design rules for the parameters $k_0,\dots,k_m>0$ are detailed in \cite[Section 6]{edcho}.
\subsection{Proof of Theorem \ref{th:medcho}}
\label{ap:medcho}
First, let $\tilde{x}_{i,\mu}(t)=x_{i,\mu}(t)/\theta$, $\tilde{\sigma}_{i}(t)=\sigma_{i}(t)/\theta$, $ \tilde{y}_{i,\mu}(t)=y_{i,\mu}(t)/\theta$. Hence, using \eqref{eq:medcho}, the dynamics of the transformed variables $\tilde{x}_{i,\mu}(t)$ are equivalent to EDCHO in \eqref{eq:main_algo} from Appendix \ref{ap:edcho}.
Moreover, expanding \eqref{eq:sigma} leads to
$$
\begin{aligned}
&|\sigma_i^{(m+1)}(t)-\sigma_j^{(m+1)}(t)|=\\
&\left|\sum_{\nu=0}^{m+1} \binom{m+1}{\nu} \frac{1}{T_{c}^{m-\nu+1}}\kappa^{(m-\nu+1)}\left(\frac{t}{T_c}\right)\left(s_i^{(\nu)}-s_j^{(\nu)}\right)\right| \\
&\leq \sum_{\nu=0}^{m+1} \binom{m+1}{\nu} \frac{1}{T_{\min}^{m-\nu+1}}K_{m-\nu+1}L_\nu = \theta
\end{aligned}
$$
for all $t\in[0,T_c]$, where Assumption \ref{as:all_bounded} was used. Thus, $|\tilde{\sigma}_i^{(m+1)}(t)-\tilde{\sigma}_j^{(m+1)}(t)|\leq 1$. Hence, Assumption \ref{as:bounded} in Appendix \ref{ap:edcho} is complied for signals $\tilde{\sigma}_1(t),\dots,\tilde{\sigma}_{\mfs{N}}(t)$ with $L=1$ \RevSeven{due to smoothness of both $s_i(t)$ and $\kappa(t)$}. Therefore, Proposition \ref{prop:edcho} in Appendix \ref{ap:edcho} implies that $\tilde{y}_{i,\mu}(t)=(1/{\mfs{N}})\sum_{i=1}^{\mfs{N}} \tilde{{\sigma}}_i(t)$, equivalently $y_{i,\mu}(t) = (1/{\mfs{N}})\sum_{i=1}^{\mfs{N}} {{\sigma}}_i(t), \forall t\geq T$ for some time $T>0$ which depends on the initial conditions ${y}_{i,\mu}(0)$. However, note that since $\sigma_i(0) = \kappa(0)s_i(0)=0,\forall i\in\mathcal{I}$, then $y_{i,\mu}(0) = 0, \forall\mu\in\{0,\dots,m\}$, which means that \eqref{eq:medcho} starts from consensus. Thus, attractivity of \eqref{eq:medcho} along
${y}_{i,\mu}(t) = (1/\mfs{N})\sum_{i=1}^{\mfs{N}}\sigma^{(\mu)}_i(t)
$ 
for all $\mu\in\{0,\dots,m\}$ is maintained through $t\in[0,T_c]$. {Note that $\sigma_i(t)$ is a smooth signal at $t=T_c$ due to the properties of $\kappa(t/T_c)$ and its derivatives at $t=T_c$ given in Definition \ref{def:modulating}. Thus the assumptions of the theorem are maintained for $t\geq T_c$.} Now, $\sigma_i(t)= s_i(t), \forall t\geq T_c$ such that, 
$
{y}_{i,\mu}(t) = \bar{s}^{(\mu)}(t), \ \ \forall t\geq T_c
$ 
and for any $\mu\in\{0,\dots,m\}$.

\subsection{Proof of Theorem \ref{th:ratio}}
\label{ap:ratio}
Assumption \ref{as:bounded} imply that Assumption \ref{as:all_bounded} is complied for each instance of \eqref{eq:medcho} in $\mathcal{M}_{\mf{s}}$. Thus, Theorem \ref{th:medcho} leading to $\mf{y}_{i,\mu}(t)=\bar{\mf{s}}(t), \forall t\geq T_c$. Now, note that $\ell_i^{(\mu)}(t)=0,\forall t\geq 0$ and $\mu>0$ from \eqref{eq:stat_signals}. Hence, $\sigma_i^{\ell}(t) = \kappa(t)\ell_i$ satisfy Assumption \ref{as:all_bounded} with bounds $L_0^{\ell},\dots,L_{m+1}^{\ell}$ given by $L_0^{\ell} = 1, L_\mu=0$ with $\mu>0$. Therefore, the form of the parameter $\theta$ in \eqref{eq:gain} of reduced to $\theta = \frac{K_{m+1}}{T_{\min}^{m-\nu+1}}$. Thus, $l_{i,0}(t)$ converge to $\mfs{N}_{\mfs{L}}/\mfs{N}$ as a consequence of Theorem \ref{th:medcho}. Note that $\mfs{N}_{\mfs{L}}/\mfs{N} \geq 1/\mfs{N}_{\max}$ implying $\max(l_{i,0}(t), 1/\mfs{N}_{\max})=\mfs{N}_{\mfs{L}}/\mfs{N}$ for $t\geq T_c$. Henceforth, $\hat{\mf{p}}_{i,\mu}(t)=\mf{\bar{p}}^{(\mu)}(t)$ is complied for all $t \geq T_c$, satisfying Definition \ref{def:rt_dvo}.

\bibliographystyle{IEEEtran}

\end{document}